\newtheorem{theorem}{Theorem}[section]
\newtheorem{lemma}[theorem]{Lemma}
\newtheorem{proposition}[theorem]{Proposition}
\providecommand{\keywords}[1]{\textbf{Keywords---} #1}
\title{$\mathcal{PT}$ Symmetry, Non-Gaussian Path Integrals, and the Quantum Black-Scholes Equation}
\author{Will Hicks: whicks7940@googlemail.com}
\begin{document}
\maketitle
\begin{abstract}
The Accardi-Boukas quantum Black-Scholes framework, provides a means by which one can apply the Hudson-Parthasarathy quantum stochastic calculus to problems in finance. Solutions to these equations can be modelled using nonlocal diffusion processes, via a Kramers-Moyal expansion, and this provides useful tools to understand their behaviour. In this paper we develop further links between quantum stochastic processes, and nonlocal diffusions, by inverting the question, and showing how certain nonlocal diffusions can be written as quantum stochastic processes. We then go on to show how one can use path integral formalism, and $\mathcal{PT}$ symmetric quantum mechanics, to build a non-Gaussian kernel function for the Accardi-Boukas quantum Black-Scholes. Behaviours observed in the real market are a natural model output, rather than something that must be deliberately included.
\end{abstract}

\keywords{Quantum Black-Scholes; Non-Gaussian Kernels; Quantum Stochastic Calculus; $\mathcal{PT}$ Symmetric Quantum Mechanics}

\section{Introduction}

There are a number of different approaches to modelling the random nature of the financial markets. The traditional method, most used by practioners in the finance industry, involves the application of Brownian motion and Ito calculus (for example see \cite{Nielsen}). This leads to parabolic partial differential equations, such as the Black-Scholes equation, that can be solved to derive the price for derivative contracts.

The application of quantum formalism to Mathematical Finance has been investigated by a number of sources. For example in \cite{Haven_1} and \cite{Haven_2}, Haven discusses the implications of modelling the price of a derivative security as a state function in a Schr{\"o}dinger equation.

Further, in \cite{Baaquie_1} Baaquie, and in \cite{Linetsky} Linetsky, apply the path integral formulation to financial modelling, and show how the standard Black-Scholes equation can be derived from a simple Lagrangian. In \cite{Baaquie_2}, \cite{Baaquie_3} and \cite{Baaquie_4}, Baaquie goes on to discuss the implication of different potential terms, as well as the implications of including acceleration terms in the relevant Hamiltonian.

In \cite{Segal}, Segal \& Segal discuss the dual approach of viewing the price of a derivative security as an observable rather than a state. In \cite{AccBoukas}, Accardi \& Boukas use the Hudson-Parthasarathy calculus described in \cite{HudParth}, to develop this idea, and derive the relevant Black-Scholes equations.

Despite drawing from quantum formalism, the Accardi-Boukas method is, in some respects, philosophically close to the classical approach to finance. Both the classical approach, and the Accardi-Boukas approach, use the no-arbitrage and self-financing assumptions to derive the relevant PDE. In the classical case, the randomness is introduced via an Ito stochastic process, whilst in the Accardi-Boukas case a Hudson-Parthasarathy quantum stochastic process is used. In fact, as is discussed in section 2, the Accardi-Boukas framework incorporates aspects of both classical, \& quantum approaches. The intrinsic uncertainty of the current market can be incorporated into an initial market quantum state, whilst the dynamics going forward are incorporated into the quantum stochastic process for the observable (for example tradeable FTSE price).

In \cite{Hicks}, it is shown that the Quantum Black-Scholes equation of Accardi-Boukas can be modelled as a nonlocal diffusion process, and that solutions can be found using McKean-Vlasov stochastic differential equations (\cite{McKean}). The particle method (for example see \cite{Guyon}) is then applied to the simulation of solutions.

In this paper, we start in section 2 with an introduction to the Accardi-Boukas quantum Black-Scholes equation. This is intended to introduce the techniques of quantum stochastic calculus, and the focus is on key principals and applications. We refer to \cite{AccBoukas} \& \cite{HudParth}, for technical details. Readers already familiar can skip to section 3.

Given the results outlined in \cite{Hicks}, it is natural to ask whether there is a deeper link between nonlocal diffusions, and quantum stochastic processes. In section 3 we aim to take the first step in answering this question by inverting the problem considered. Rather than starting from a specific quantum stochastic process and writing the solution as a nonlocal diffusion, we show that given a nonlocal diffusion, in certain cases it is possible to associate an equivalent quantum stochastic process. In \cite{HL}, Henry-Labord\`ere shows that the local volatility pricing model, originally developed by Dupire in \cite{Dupire}, can be modelled using a diffusion on a Riemannian manifold. We solve the problem in hand by extending this idea to a {\em nonlocal} diffusion on a Riemannian manifold.

In section 4, we seek to develop the path integral approach associated with the Accardi-Boukas quantum Black-Scholes equations. Using this method one can derive the partial differential equations of finance using least action principals, in a way that is consistent with the basic principals of quantum physics. In addition to providing an alternative theoretical viewpoint from which to understand a system, this also yields numerical techniques that can be applied.

Attempting to write down the quantum Kolmogorov backward equation as a Schr{\"o}dinger equation leads to a non-Hermitian Hamiltonian function, that is an infinite power series in the momentum variable. In section 4.2, we show how the $\mathcal{PT}$ symmetry condition, discussed by Bender in \cite{Bender}, applies in this case. The analysis presented in \cite{Bender} shows $\mathcal{PT}$ symmetry is sufficient to ensure that this will still lead to a valid quantum theory. We then show how extending the usual Wick rotation to both the time \& space variables enables us to write the quantum Kolmogorov backward equation as a Schr{\"o}dinger equation as required.

In section 4.4 we follow standard steps in deriving the relevant non-Gaussian kernel using the Fourier transform (see for example \cite{Hall} chapter 20). In section 4.5, we use the alternative Legendre transform method again following standard steps that have previously been used to derive the Wiener measure, and Gaussian kernels (for example see \cite{Folland}). In addition to giving more physical insight, this method leads to an expression for an integral kernel, that can be used in Monte-Carlo simulations. Common market behaviours such as volatility skew, and fat tails, occur completely naturally as a result of the Hamiltonian function, in a way that is not observed when carrying out the equivalent derivation for the Wiener measure more commonly used for financial modelling. In the classical case, such behaviours require additional model features to be added, such as stochastic volatility.
\section{Quantum Stochastic Processes \& the Accardi-Boukas Quantum Black-Scholes}

A classical $k$-dimensional random variable can be defined as mapping from a probability space, $\Omega$ to a real vector in $\mathbb{R}^k$. The domain for the mapping is controlled by a sigma algebra $\mathcal{F}$.

A classical stochastic process is a random variable, indexed by an interval of the real line (time). $X: \Omega\times T \rightarrow \mathbb{R}^k$. Now the domain is controlled by a filtration: $\mathcal{F}_t$, where: $\mathcal{F}_s\subset\mathcal{F}_t$ if $s<t$. See for example, \cite{Nielsen}.

In fact, the full technical detail of probability theory (whilst important) can be ignored for most practical purposes in finance. The key building blocks are the Wiener process: $W(t)$ (which can be thought of as a real valued function on the real line), the Ito integral (the existence of which can be taken as given), and Ito's lemma.

We take the same approach in this section: the full mathematical machinery behind quantum field theory, and the construction of path integrals, is unnecessary for practical purposes in finance. Readers familiar with the Hudson-Parthasarathy quantum stochastic calculus (\cite{HudParth}), and the Accardi-Boukas Quantum Black-Scholes (\cite{AccBoukas}), can skip to section 3.

\subsection{Market State Space}
The market that we are trying to model can be described by a state function sitting in the tensor product of the initial space $\mathcal{H}$ and the Boson Fock space: $\Gamma(L^2(\mathbb{R}^+,\mathcal{H}))$. This is described below.

\subsubsection{Initial Space}
The initial space: $\mathcal{H}$, is a Hilbert space that carries the price information from the current market. If we want to know the current price of the FTSE index, then this is represented by the operator $X$, where $X$ acts on the state function $\phi(x)$ by pointwise multiplication: $X\phi=x\phi(x)$. To get the expected price one can trade the FTSE index at right now, we carry out the following calculation:
\vspace{5mm}

$\mathbb{E}\big[X\big]=\langle\phi,X\phi\rangle=\int_\mathbb{R} x\rvert\phi(x)\lvert^2dx$, for $\phi(x)\in \mathcal{H}$.
\vspace{5mm}

If I know with certainty, that the FTSE is at $7000$, then the initial state function would be a Dirac state: $\rvert \phi(x)\lvert^2=\delta(7000-x)$.

\subsubsection{Boson Fock Space}
To define a quantum stochastic process we require a mechanism to incrementally amend the initial quantum state as time progresses, essentially by adding the drift \& the random diffusion. This is achieved using the Boson Fock space.

We start with functions from the time axis, with values in the Hilbert space that carries the pricing information ($\mathcal{H}$). This space is written: $\mathcal{K}=L^2(\mathbb{R}^+,\mathcal{H})$.

Next we take the exponential vectors, $\psi(f)$. For $f\in\mathcal{K}$ we have: $\psi(f)=(1,f,\frac{f\otimes f}{\sqrt{2}},....,\frac{f^{\otimes n}}{n^{1/2}},...)$, so that: $\langle \psi(f),\psi(g)\rangle=e^{\langle f,g\rangle}$, for $f,g\in \mathcal{K}$. The Boson Fock space is defined as the Hilbert space completion of these exponential vectors, which now provide the mechanism we require.
\vspace{5mm}

Our market state space is the tensor product space: $\mathcal{H}\otimes\Gamma(\mathcal{K})$. Initially at $t=0$, the Boson Fock space can be thought of as being empty (although this turns out to be unimportant). The operator $X$ that returns the expected FTSE price becomes $X\otimes\mathbb{I}$, where $\mathbb{I}$ represents the identity operator on the empty Boson Fock space, and the calculation of the FTSE index price right now, is unchanged.

\subsubsection{Quantum Drift}
A particle, with initial wave function $\phi(x)=\int_{\mathbb{R}} \tilde{\phi}(p)e^{ipx}dp$, in a system controlled by the Hamiltonian function $H(x,p)$, where $p$ represents the momentum, has a unitary time development operator: $U_t=e^{iHt}$. Thus if the operator $X_0$ returned the position at time $0$, then we have at time $t$ (we assume Planck's constant $\hbar=1$):
\begin{equation}\label{Xt}
X_t=j_t(X_0)=U_t^*X_0U_t
\end{equation}

The Hamiltonian function $H(x,p)$ is the infinitesimal generator for the time development operator, and we can write the following quantum SDE:
\begin{equation}\label{drift}
dU_t=(-iHdt)U_t
\end{equation}

The situation for modelling the FTSE is exactly the same. To define a quantum SDE with drift, we require a self-adjoint operator $H$, which controls the drift through equations (\ref{Xt}) \& (\ref{drift}).

For a classical particle with drift, the position is a deterministic function of time. Now the position of the particle is no longer deterministic. It is the wave function that evolves in a deterministic fashion.

\subsubsection{Quantum Diffusion}
We now add operators that allow the market state function to evolve stochastically. This is described by Hudson \& Parthasarathy in \cite{HudParth}. The operators we require act on the exponential vectors in the Boson Fock space as follows:
\vspace{5mm}

$A_t\psi(g)=\bigg(\int_0^tg(s)ds\bigg)\psi(g)$, $A^{\dagger}_t\psi(g)=\frac{d}{d\epsilon}\bigg\rvert_{\epsilon=0}\psi(g+\epsilon\chi_{(0,t)})$, $\Lambda_t\psi(g)=\frac{d}{d\epsilon}\bigg\rvert_{\epsilon=0}\psi(e^{\epsilon\chi_{(0,t)}}g)$
\vspace{5mm}

Further we can define the stochastic differentials as:
\vspace{5mm}

$dA_t=\big(A_{t+dt}-A_t\big)$, $dA^{\dagger}_t=\big(A^{\dagger}_{t+dt}-A^{\dagger}_{t}\big)$, $d\Lambda_t=\big(\Lambda_{t+dt}-\Lambda_t\big)$
\vspace{5mm}

The significance of these operators derives from the functional form for the time development operator. In order for $U_t$ to be unitary, it must have the following form (see \cite{HudParth}, section 7):
\begin{equation}\label{dUt}
dU_t=-\Bigg(\bigg(iH+\frac{1}{2}L^*L\bigg)dt+L^*SdA_t-LdA^{\dagger}_t+\bigg(1-S\bigg)d\Lambda_t\Bigg)U_t
\end{equation}

Where $H, L$ and $S$ are bounded linear operators on $\mathcal{H}$, with $H$ self-adjoint, and $S$ unitary. With $L=0$, and $S=1$, this reduces to the drift quantum SDE given in equation (\ref{drift}).

\subsection{Quantum Ito Formula, and Quantum Black-Scholes}
The classical Black-Scholes equation is derived by first expanding the derivative valuation function $V(X,t)$ using Ito's lemma. Then constructing a replicating portfolio, which eliminates the risky terms, equating the 2, and assuming that the return on the original investment $V(X,t)$ is given by the return on the chosen numeraire asset. The derivation of the quantum Black-Scholes follows similar steps. The full derivation is outlined by Accardi \& Boukas in \cite{AccBoukas}. In this section we give an overview. The first step is the quantum version of the Ito formula

\subsubsection{Quantum Ito Formula}

The quantum stochastic differentials can be combined using the following multiplication table (see \cite{AccBoukas} lemma 1, \cite{HudParth} Theorem 4.5):
\vspace{5mm}

\begin{tabular}{p{1cm}|p{1cm}p{1cm}p{1cm}p{1cm}}
-&$dA^{\dagger}_t$&$d\Lambda_t$&$dA_t$&$dt$\\
\hline
$dA^{\dagger}_t$&0&0&0&0\\
$d\Lambda_t$&$dA^{\dagger}_t$&$d\Lambda_t$&0&0\\
$dA_t$&$dt$&$dA_t$&0&0\\
$dt$&0&0&0&0
\end{tabular}
\vspace{5mm}

We can see from the table above that:
\vspace{5mm}

$\mathbb{E}\big[\big(\int_0^t dA_s+dA^{\dagger}_s)^2\big]=\mathbb{E}\big[\int_0^t dA_sdA_s+dA^{\dagger}_sdA^{\dagger}_s+dA_sdA^{\dagger}_s+dA^{\dagger}_sdA_s)\big]=\int_0^t ds=t=\mathbb{E}\big[W(t)^2\big]$.
\vspace{5mm}

In fact, with $S=1$ in equation (\ref{dUt}), the terms in $d\Lambda_t$ disappear. The resulting operator is commutative, and the resulting PDE is the same as the classical Black-Scholes PDE (\cite{AccBoukas} Proposition 2). For $S\neq 1$, we have a non-commutative system, and the Black-Scholes equations have more complicated dynamics. This is discussed in the next sub-section. The key result, regarding the time development of $X^k$, can be obtained by application of the above multiplication rules, and is given by \cite{AccBoukas} lemma 1:
\begin{equation}\label{X^k}
\begin{split}
dj_t(X^k)=j_t(\lambda^{k-1}\alpha^{\dagger})dA^{\dagger}_t+j_t(\alpha\lambda^{k-1})dA_t+j_t(\lambda^k)d\Lambda_t+j_t(\alpha\lambda^{k-2}\alpha^{\dagger})dt\\
\alpha=[L^*,X]S, \alpha^{\dagger}=S^*[X,L], \lambda=S^*XS-X
\end{split}
\end{equation}

\subsubsection{Non-commutative Quantum Black-Scholes}
In this subsection we follow the derivation of the quantum Black-Scholes given in \cite{AccBoukas}, lemma 2. First start with the assumption that the derivative price is given by: $V_t=F(t,j_t(X))$, and that this can be expanded as a power series: $F(t,x)=\sum_{n,k\geq 0} a_{n,k}(t-t_0)^n(x-x_0)^k$, where:
\vspace{5mm}

$a_{n,k}(t_0,x_0)=\frac{F_{n,k}(t_0,x_0)}{n!k!}$, and: $F_{n,k}(t_0,x_0)=\frac{\partial^{n+k}F}{\partial t^n\partial x^k}\Big\rvert_{t=t_0,x=x_0}$.
\vspace{5mm}

Next, as in the classical case, we assume we can form a replicating strategy by investing $a_t$ at time $t$, in the risky asset, and the remainder of the portfolio in the numeraire asset. Thus we can equate the following:
\vspace{5mm}

Replicating strategy:

$dV_t=a_tdj_t(X)+(V_t-a_tj_t(X))rdt$
\vspace{5mm}

Power series expansion (ignoring terms in $(dt^n)$ for $n\geq2$):

$dV_t=a_{1,0}(t,j_t(X))+\sum_{k\geq 2} a_{0,k}(t,j_t(X))j_t(X^k)$
\vspace{5mm}

These 2 can be equated using equation (\ref{X^k}). The terms in $dt$ yield the quantum Black-Scholes equation:
\begin{equation}\label{QBS}
\begin{split}
a_{1,0}(t,j_t(X))+a_{0,1}(t,j_t(X))j_t(\theta)+\sum_{k=2}^{\infty} a_{0,k}(t,j_t(X))j_t(\alpha\lambda^{k-2}\alpha^{\dagger})
=a_tj_t(\theta)+V_tr-a_tj_t(X)r\\
j_t(\theta)=i[H,X]-\frac{1}{2}(L^*LX+XL*L-2L^*XL)
\end{split}
\end{equation}

The terms in $dA_t$, $dA^{\dagger}_t$, and $d\Lambda_t$ are equated to zero by applying the following boundary conditions:
\vspace{5mm}

$\sum_{k=1}^{\infty} a_{0,k}(t,j_t(X))j_t(\lambda^{k-1}\alpha^{\dagger})=a_tj_t(\alpha^{\dagger})$
\vspace{5mm}

$\sum_{k=1}^{\infty} a_{0,k}(t,j_t(X))j_t(\alpha\lambda^{k-1})=a_tj_t(\alpha)$
\vspace{5mm}

$\sum_{k=1}^{\infty} a_{0,k}(t,j_t(X))j_t(\lambda^{k})=a_tj_t(\lambda)$
\vspace{5mm}
\subsubsection{Translations, and Classical Black-Scholes}\label{sssection:trans}
In this subsection, based on the analysis presented in \cite{Hicks}, we briefly summarise how by applying unitary transformations to a classical Black-Scholes system, one obtains new Quantum Black-Scholes PDEs.
In the classical case, we have $S=1$, and therefore $\lambda=0$. Furthermore, $j_t(\alpha\alpha^{\dagger})$ is given by $\sigma^2x^2$. The boundary conditions now reduce to the single condition:
\vspace{5mm}

$a_{0,1}(t,j_t(X))=a_t$
\vspace{5mm}

Plugging this into equation (\ref{QBS}) gives:
\vspace{5mm}

$a_{1,0}(t,j_t(X))+a_{0,1}(t,j_t(X))j_t(X)r+a_{0,2}\sigma^2j_t(X^2)-V_tr=0$
\vspace{5mm}

Using standard notation we get:
\begin{equation}\label{QBS_classical}
\frac{\partial V}{\partial t}+rx\frac{\partial V}{\partial x}+\frac{\sigma^2x^2}{2}\frac{\partial^2 V}{\partial x^2}-rV=0
\end{equation}

which is the classical Black-Scholes PDE. We now look for different unitary transformations: $S$ that can be applied. The natural Hilbert space for an equity price (say FTSE price) is: $\mathcal{H}=L^2(\mathbb{R})$. In this instance, we can look at the unitary transformations:
\vspace{5mm}

$T_{\varepsilon}f(x)\rightarrow f(x-\varepsilon)$.
\vspace{5mm}

In this case we have, for a translation invariant Lebesgue measure $\mu$:
\vspace{5mm}

$\langle T_{\varepsilon} f|T_{\varepsilon} g\rangle = \int_{\mathbb{R}} \overline{f(x-\varepsilon)}g(x-\varepsilon)d\mu(x)=\int_{\mathbb{R}} \overline{f(x)}g(x)d\mu(x)=\langle f|g\rangle$
\vspace{5mm}

Therefore, $T_{\varepsilon}$ is unitary and we get:
\vspace{5mm}

$\lambda f(x)=T_{-\varepsilon}XT_\varepsilon f(x) - Xf(x)=T_{-\varepsilon}xf(x-\varepsilon )-xf(x)=(x+\varepsilon )f(x)-xf(x)=\varepsilon f(x)$
\vspace{5mm}

The resulting Quantum Black-Scholes equation is:
\begin{equation}\label{QBS_trans}
\frac{\partial V}{\partial t}+rx\frac{\partial V}{\partial x}+\sigma^2x^2\sum_{k\geq 2}\frac{\varepsilon^{k-2}}{k!}\frac{\partial^k V}{\partial x^k}-rV=0
\end{equation}

See \cite{Hicks} for further discussion around the link between this PDE, and McKean stochastic processes, and Monte-Carlo simulation of solutions.
\subsubsection{Rotation, and Bid-Offer Interference}
An alternative transformation, that can be applied in multiple dimensions, involves rotations (see \cite{Hicks}). For example, if $x$ represents the mid-price for the FTSE, we could represent the bid-offer spread using an additional parameter: $\epsilon$, so that $(x-\epsilon)$ represented the best bid price, and $(x+\epsilon)$ represented the best offer price. Classically, if we assume that we can trade the mid-price (for example during an end of day auction process on an electronic exchange), and the derivative contract depends only on the mid-price, then terms involving the bid-offer spread $\epsilon$ will drop out of the Black-Scholes PDE. In this case, the additional market information, provided by the level of the bid-offer spread, is irrelevant. The impact of non-zero bid-offer spread, only impacts the model once one takes into account that one cannot, in general, hedge at the mid-price (for example see \cite{Leyland}).

It is shown in \cite{Hicks} that, in the non-commutative quantum case, there is bid-offer interference that impacts the price and model dynamics, even if one can trade at the mid-price at any time. The Hilbert space is now $\mathcal{H}=L^2(\mathbb{R}^2)$. Further we define:
\vspace{5mm}

$Xf(x,\epsilon)=xf(x,\epsilon)$, $Ef(x,\epsilon)=\epsilon f(x,\epsilon)$, and $S_{\phi}f(x,\epsilon)=f(cos(\phi)x-sin(\phi)\epsilon,cos(\phi)\epsilon+sin(\phi)x)$.
\vspace{5mm}

In this case we end up with:
\vspace{5mm}

$\lambda_{x}f(x,\epsilon)=S_{\phi}^*XS_{\phi}f(x,\epsilon)-Xf(x,\epsilon)=\big((cos(\phi)-1)x+sin(\phi)\epsilon\big)f(x,\epsilon)$
\vspace{5mm}

$\lambda_{\epsilon}f(x,\epsilon)=S_{\phi}^*ES_{\phi}f(x,\epsilon)-Ef(x,\epsilon)=\big((cos(\phi)-1)\epsilon-sin(\phi)x\big)f(x,\epsilon)$
\vspace{5mm}

Inserting this back into equation (\ref{QBS}), and making the assumption that the random variables $x$ and $\epsilon$ are uncorrelated, leads to the following Quantum Black-Scholes (see \cite{Hicks} for details):
\begin{equation}\label{QBS_rot}
\begin{split}
\frac{\partial V(t,x,\epsilon)}{\partial t}+rx\frac{\partial V(t,x,\epsilon)}{\partial x}+r\epsilon\frac{\partial V(t,x,\epsilon)}{\partial \epsilon}-V(t,x,\epsilon)r\\
+\sigma_x^2x^2\sum_{k=2}^{\infty}\frac{((cos(\phi)-1)x+sin(\phi)\epsilon)^{k-2}}{k!}\frac{\partial^k V(t,x,\epsilon)}{\partial x^k}\\
+\sigma_{\epsilon}^2\epsilon^2\sum_{l=2}^{\infty}\frac{((cos(\phi)-1)\epsilon-sin(\phi)x)^{l-2}}{l!}\frac{\partial^l V(t,x,\epsilon)}{\partial \epsilon^l}=0
\end{split}
\end{equation}

In the case $\phi=0$, the terms for $k,l\geq 3$ drop out. Furthermore, if the derivative payout depends only on $x$ and not $\epsilon$, the boundary condition will be defined as a function of $x$, which will lead to a solution such that $\partial V/\partial \epsilon=0$. Thus, the case $\phi=0$ leads to the classical Black-Scholes. For $\phi\neq 0$, there will be complex dynamics even in the event of 0 correlation between the 2 variables.

For example, the choice of $\phi=\pi/2$, leads to equation (\ref{QBS_rot2}), where the multipliers for the $x$ partial derivatives, $\partial^kV/\partial x^k$ are dependent on the bid-offer spread $\epsilon$. This will lead to dependence on $\epsilon$, even where the boundary condition is a function of $x$ only. Whilst values of $\phi$ closer to zero are likely a better match for the real market, it does highlight the possibilities of the quantum stochastic approach.
\begin{equation}\label{QBS_rot2}
\begin{split}
\frac{\partial V(t,x,\epsilon)}{\partial t}+rx\frac{\partial V(t,x,\epsilon)}{\partial x}+r\epsilon\frac{\partial V(t,x,\epsilon)}{\partial \epsilon}-V(t,x,\epsilon)r\\
+\sigma_x^2x^2\sum_{k=2}^{\infty}\frac{\epsilon^{k-2}}{k!}\frac{\partial^k V(t,x,\epsilon)}{\partial x^k}+\sigma_{\epsilon}^2\epsilon^2\sum_{l=2}^{\infty}\frac{(-x)^{l-2}}{l!}\frac{\partial^l V(t,x,\epsilon)}{\partial \epsilon^l}=0
\end{split}
\end{equation}
\section{Nonlocal Diffusion \& Quantum Stochastic Processes}
\subsection{Kolmogorov's Forward Equation}\label{ss:KFE}
With interest rates set to zero, the Quantum Black-Scholes equation (for example equation (\ref{QBS_trans}), and equation (\ref{QBS_rot})) becomes a standard Kolmogorov backward equation. The general form for these equations, in 2 variables, is given by:
\begin{equation}\label{KBEqn}
\begin{split}
\frac{\partial u(t,x,\epsilon)}{\partial t}=g_1(x,\epsilon)\sum_{k=2}^{\infty}\frac{f_1(x,\epsilon,\varepsilon)^{k-2}}{k!}\frac{\partial^k u(t,x,\epsilon)}{\partial x^k}\\
+g_2(x,\epsilon)\sum_{l=2}^{\infty}\frac{f_2(x,\epsilon,\varepsilon)^{l-2}}{l!}\frac{\partial^l u(t,x,\epsilon)}{\partial \epsilon^l}
\end{split}
\end{equation}
The associated Kolmogorov forward equation, to equation (\ref{KBEqn}) is given by \cite{Hicks} Proposition 3.1:
\begin{equation}\label{KFwdEqn}
\begin{split}
\frac{\partial p(x,\epsilon,t)}{\partial t}=\sum_{k=2}^{\infty}\frac{(-1)^k}{k!}\frac{\partial^k \big(g_1(x,\epsilon)^{1/2}f_1(x,\epsilon,\varepsilon)^{k-2}p(x,\epsilon,t)\big)}{\partial x^k}\\
+\sum_{l=2}^{\infty}\frac{(-1)^l}{l!}\frac{\partial^l \big(g_2(x,\epsilon)^{1/2}f_2(x,\epsilon,\varepsilon)^{l-2}p(x,\epsilon,t)\big)}{\partial \epsilon^l}
\end{split}
\end{equation}
In \cite{Hicks}, section 3, it is shown that equation (\ref{KFwdEqn}) can be associated to the forward equation for a nonlocal diffusion process, via a moment matching technique.

For example, in a simple translation case we set: $g_1(x, \epsilon)=\sigma^2$, $f_1(x, \epsilon, \varepsilon)=\varepsilon$, and $g_2(x, \epsilon)=f_2(x, \epsilon, \varepsilon)=0$. Equation (\ref{KFwdEqn}) becomes:
\begin{equation}\label{KFE_trans}
\frac{\partial p(x,t)}{\partial t}=\sigma^2\sum_{k=2}^{\infty}\frac{(-1)^k\varepsilon^{k-2}}{k!}\frac{\partial^k (p(x,t))}{\partial x^k}
\end{equation}
The forward equation for a nonlocal diffusion process can be written:
\begin{equation}\label{nonlocal}
\frac{\partial p(x,t)}{\partial t}=\frac{\sigma^2}{2}\frac{\partial^2}{\partial x^2}\bigg(\int_{-\infty}^{\infty}H(y)p(x-y,t)dy\bigg)
\end{equation}
The function $H(y)$ has the impact of ``blurring'' the impact of the diffusion operator. If $H(y)$ is replaced with a dirac function $\delta(y)$, then this equation reduces to a standard Kolmogorov forward equation. Next we use the Kramers-Moyal technique of expanding about $y=0$: $p(x-y,t)=\sum_{k\geq 0} \frac{(-1)^ky^k}{k!}\frac{\partial ^k p(x,t)}{\partial x^k}$. Inserting this into equation (\ref{nonlocal}) we get:
\begin{equation}\label{KME}
\frac{\partial p(x,t)}{\partial t}=\frac{\sigma^2}{2}\bigg(\sum_{k\geq 0} \Big(\frac{(-1)^k}{k!}\frac{\partial^k (p(x,t))}{\partial x^k}\int_{-\infty}^{\infty} y^kH(y)dy\Big)\bigg)
\end{equation}
Now equation coefficients between equations (\ref{KFE_trans}), and (\ref{KME}) we get (\cite{Hicks}, proposition 3.2):
\begin{lemma}\label{lem_1}
The Kolmogorov forward equation (\ref{KFE_trans}), associated to the translation applied in subsection: \ref{sssection:trans}, can be written as a nonlocal diffusion equation: (\ref{nonlocal}). Let $H_i$ represent the $i^{th}$ moment of the function $H(y)$. Then $H_i$ is given by:
$H_i=\frac{2(\varepsilon)^i}{(i+1)(i+2)}$
\end{lemma}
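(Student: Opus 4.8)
The plan is to match the Kramers-Moyal expansion of the nonlocal diffusion equation (\ref{KME}) term-by-term against the Kolmogorov forward equation (\ref{KFE_trans}) and read off the moments $H_i$. First I would observe that the two right-hand sides are both power series in the derivative operators $\partial^k/\partial x^k$, so equating the coefficient of $\partial^k p/\partial x^k$ for each $k\ge 0$ gives a system of scalar equations relating $\int y^k H(y)\,dy$ to $\varepsilon^{k-2}$ (with the appropriate combinatorial prefactors). The shift of index is the only subtlety: in (\ref{KFE_trans}) the sum starts at $k=2$ and the coefficient of $\partial^k p/\partial x^k$ is $\sigma^2(-1)^k\varepsilon^{k-2}/k!$, whereas in (\ref{KME}) the coefficient of the same derivative is $(\sigma^2/2)(-1)^k H_k/k!$ where $H_k=\int y^k H(y)\,dy$. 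So I would set
\begin{equation}\label{eq:match}
\frac{\sigma^2}{2}\cdot\frac{(-1)^k}{k!}H_k = \sigma^2\cdot\frac{(-1)^k}{k!}\varepsilon^{k-2},\qquad k\ge 2,
\end{equation}
which naively gives $H_k = 2\varepsilon^{k-2}$ — but this does not match the claimed $H_i=2\varepsilon^i/((i+1)(i+2))$, so a reindexing is clearly intended and I need to be careful about which ``$k$'' plays which role.

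The resolution I expect is that the Kramers-Moyal sum in (\ref{KME}) should be aligned with (\ref{KFE_trans}) after recognising that the operator $\partial^2/\partial x^2$ sitting \emph{outside} the convolution in (\ref{nonlocal}) accounts for the offset: expanding $\partial_x^2(\int H(y)p(x-y,t)\,dy)$ produces $\sum_{i\ge 0}\frac{(-1)^i}{i!}H_i\,\partial_x^{i+2}p$, so the coefficient of $\partial_x^{k}p$ for $k\ge 2$ is $\frac{(-1)^{k-2}}{(k-2)!}H_{k-2}=\frac{(-1)^k}{(k-2)!}H_{k-2}$. Matching this against the coefficient $(-1)^k\varepsilon^{k-2}/k!$ from (\ref{KFE_trans}) (the $\sigma^2/2$ and $\sigma^2$ being reconciled by the explicit $1/2$ in (\ref{nonlocal})) gives
\begin{equation}\label{eq:match2}
\frac{H_{k-2}}{(k-2)!}=\frac{2\varepsilon^{k-2}}{k!}\quad\Longrightarrow\quad H_{k-2}=\frac{2\varepsilon^{k-2}(k-2)!}{k!}=\frac{2\varepsilon^{k-2}}{(k-1)k}.
\end{equation}
Writing $i=k-2$ yields $H_i = 2\varepsilon^i/((i+1)(i+2))$, which is exactly the claimed formula. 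So the proof is essentially: (i) expand the convolution, (ii) pull the two $x$-derivatives in and reindex the series, (iii) equate coefficients with (\ref{KFE_trans}), (iv) solve the resulting one-line recursion.

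The main obstacle — really the only thing requiring care — is bookkeeping the index shift and the prefactors consistently, since a careless match (as in (\ref{eq:match})) gives the wrong answer; one must correctly attribute the ``extra'' two orders of differentiation and the factor of $1/2$. A secondary point worth a sentence is justifying that term-by-term matching of the formal power series in $\partial_x$ is legitimate: this is the standard Kramers-Moyal/moment-matching argument already invoked in \cite{Hicks}, valid because (\ref{KFE_trans}) is derived precisely as such an expansion, so equality of the two differential operators on the relevant class of test functions forces equality of coefficients. I would not belabour convergence of $\sum y^k H(y)$ beyond noting that the moments $H_i$ so defined are consistent with a genuine (signed) kernel $H$, and in fact $H_i=2\varepsilon^i/((i+1)(i+2))$ are the moments of $H(y)=2(\varepsilon-y)/\varepsilon^2$ on $[0,\varepsilon]$ (or the analogous expression), which one can verify directly as a sanity check.
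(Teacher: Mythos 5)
Your proposal is correct and is essentially the paper's own argument: the printed proof simply states that the result follows by equating coefficients of each partial derivative (deferring details to \cite{Hicks}, Proposition 3.2), which is exactly the matching you carry out, including the crucial reindexing from the $\partial^2/\partial x^2$ sitting outside the convolution in (\ref{nonlocal}), whose omission in the displayed form of (\ref{KME}) you rightly diagnose as the source of the naive (wrong) answer $H_k=2\varepsilon^{k-2}$. Your closing sanity check that $H_i=2\varepsilon^i/((i+1)(i+2))$ are the moments of the triangular density $2(\varepsilon-y)/\varepsilon^2$ on $[0,\varepsilon]$ is a correct and worthwhile bonus.
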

\begin{proof}
The proof is obtained by equating coefficients in each $k^{th}$ partial derivative. For details refer to \cite{Hicks} proposition 3.2.
\end{proof}
The link between quantum stochastic processes and nonlocal diffusions provides a useful tool in terms of visualising the solutions, and simulating using Monte-Carlo methods. In \cite{Hicks} it is shown that by writing the solution as a McKean-Vlasov process (see \cite{McKean}) we can apply the particle method (\cite{Guyon}, chapters 10,11) to simulate the solution. For example, the nonlocal diffusion described by equation (\ref{nonlocal}), can be written as (see \cite{Hicks}, section 4):
\begin{equation}\label{McKean_SDE}
dx=\sigma\bigg(\frac{1}{p(x,t)}\mathbb{E}^{p(y,t)}\big[H(x-y|x)\big]\bigg)^{1/2}dW
\end{equation}
Therefore, given the usefulness in this context, it is natural to ask whether there is a deeper link between nonlocal diffusions \& the quantum stochastic processes described by Hudson \& Parthasarathy. The first step is to inverse the question tackled in \cite{Hicks}. Given a particular nonlocal diffusion, can we write the solution as a quantum stochastic process?
\subsection{Nonlocal Diffusion on a Riemannian Manifold}
In this section, we apply a similar methodology to that outlined by Henry-Labord\`ere in \cite{HL} chapter 4 and 5, to a nonlocal diffusion. A general Laplacian on a 1 dimensional Riemannian manifold with metric: $g(x)$, is given by:
\begin{equation}\label{general_LP}
\Delta_g=g^{-1/2}\Big(\frac{\partial}{\partial x}+\mathcal{A}_x\Big)g^{-1/2}\Big(\frac{\partial}{\partial x}+\mathcal{A}_x\Big)+Q(x)
\end{equation}
Where $\mathcal{A}_x$ represents the components of an Abelian connection, and $Q(x)$ a section of a real vector bundle over the manifold $M$. The objective now is to choose $\mathcal{A}_x$, and $Q(x)$ in order to simplify the resulting partial differential equation. If we start with the assumption that $A_x=Q(x)=0$, then the second derivative in equation (\ref{nonlocal}) becomes the Laplace-Beltrami operator:
\begin{equation}\label{Laplace_Belt}
\frac{\partial p(t,x)}{\partial t}=\frac{\sigma^2}{2\sqrt{g(x)}}\frac{\partial}{\partial x}\bigg(\frac{1}{\sqrt{g(x)}}\frac{\partial}{\partial x}\Big(\int_{-\infty}^{\infty}H(y)p(t,x-y)dy\Big)\bigg)
\end{equation}
The next step is to apply the Kramers-Moyal expansion to equation (\ref{Laplace_Belt}), as we did for equation (\ref{KME}):
\begin{equation}\label{KME_f}
\frac{\partial p(t,x)}{\partial t}=\frac{\sigma^2}{2\sqrt{g(x)}}\frac{\partial}{\partial x}\bigg(\frac{1}{\sqrt{g(x)}}\frac{\partial}{\partial x}\Big(\sum_{k\geq 0}\int_{-\infty}^{\infty}\frac{y^k}{k!}H(y)dy\frac{(-1)^k}{k!}\frac{\partial^k p(t,x)}{\partial x^k}\Big)\bigg)
\end{equation}
Writing: $H_i=\int_{-\infty}^{\infty}y^iH(y)dy$, and expanding out the partial derivatives in equation (\ref{KME_f}), we get:
\begin{equation}\label{KME_f_expanded}
\frac{\partial p}{\partial t}=\frac{\sigma^2}{4}\frac{\partial (g^{-1})}{\partial x}H_0\frac{\partial p}{\partial x}+\frac{\sigma^2}{2}\sum_{k\geq 2} \bigg(\frac{(-1)^{(k-2)}(g^{-1})H_{k-2}}{(k-2)!}+\frac{(-1)^{(k-1)}\frac{\partial (g^{-1})}{\partial x}H_{k-1}}{2(k-1)!}\bigg)\frac{\partial^k p}{\partial x^k}
\end{equation}
To find an equivalent quantum stochastic process, we start with the relevant quantum Kolmogorov backward equation of which equation (\ref{QBS_trans}) is a special type, and attempt to fit the coefficients $\mu(x), f(x)$, and $\varepsilon$ to equation (\ref{KME_f_expanded}):
\begin{equation}
\frac{\partial u}{\partial t}+\mu(x)\frac{\partial u}{\partial x}+f(x)\sum_{k\geq 2}\frac{\varepsilon^{k-2}}{k!}\frac{\partial^ku}{\partial x^k}=0
\end{equation}
The relevant forward equation is given by (see \cite{Hicks} proposition 3.1):
\begin{equation}\label{fwd_eqn}
\frac{\partial p}{\partial t}=\mu(x)\frac{\partial p}{\partial x}+f(x)\sum_{k\geq 2}\frac{(-\varepsilon)^{k-2}}{k!}\frac{\partial^kp}{\partial x^k}
\end{equation}
Equating coefficients for the first partial derivative we find that the quantum stochastic process we require must have a drift coefficient: $\mu(x)=\frac{\sigma^2}{4}\frac{\partial (g^{-1})}{\partial x}H^g_0$.
Equating coefficients for higher partial derivatives we get an infinite series of 1st order partial differential equations to solve:
\begin{equation}\label{g_pde}
k(k-1)(g^{-1})H_{k-2}-\frac{k}{2}\frac{\partial (g^{-1})}{\partial x}H_{k-1}=2f(x)\varepsilon^{k-2}/\sigma^2
\end{equation}
By inserting the moments $H_i$, we end up with an infinite number of equations, where we only have the functions $f(x)$, $g(x)$, and the constant $\varepsilon$, to change. To make progress we must either simplify equation (\ref{g_pde}), or simplify the original Laplacian. We therefore investigate the following ways of achieving this:
\begin{itemize}
\item Setting $\partial (g^{-1})/\partial x$ to zero.
\item Setting $g(x)^{-1}$ to be a fixed multiple of the volatility function $f(x)$, in equation (\ref{fwd_eqn}).
\item Introducing a non-zero connection: $\mathcal{A}_x$, and line bundle section: $Q(x)$, to simplify equation (\ref{KME_f_expanded}).
\end{itemize}
We investigate each of these in turn below.
\newline
\newline
\underline{1) $\partial (g^{-1})/ \partial x=0$:}
\newline
\newline
In this case, we set $f(x)=\sigma^2$, $g(x)=1$. We are back to the standard nonlocal diffusion in Euclidean space. Equation (\ref{g_pde}) becomes:
\begin{equation}
k(k-1)H^g_{k-2}=2\varepsilon^{k-2}
\end{equation}
and the moments of $H(y)$ must be those given by lemma \ref{lem_1}.
\newline
\newline
\underline{2) $g(x)^{-1}=f(x)/\sigma^2$:}
\newline
\newline
This case now represents nonlocal diffusion on a curved manifold. However, instead of allowing a general Riemannian metric we simplify equation (\ref{g_pde}) by restricting the functional form for $g(x)$. If $g(x)^{-1}=f(x)/\sigma^2$, equation (\ref{g_pde}) becomes:
\begin{equation}\label{g_pde_2}
\frac{\partial (g^{-1})}{\partial x}-\Big(\frac{2k(k-1)H_{k-2}-4\varepsilon^{(k-2)}}{kH_{k-1}}\Big)(g^{-1})=0
\end{equation}
If the moments $H_k$ follow the recurrence relation, for all $k$:
\begin{equation}
\alpha=\Big(\frac{2k(k-1)H_{k-2}-4\varepsilon^{(k-2)}}{kH_{k-1}}\Big)
\end{equation}
We get a single equation for $g(x)^{-1}$: 
\begin{equation}
\frac{\partial (g^{-1})}{\partial x}-\alpha g^{-1}=0
\end{equation}
This equation can be solved to yield $g(x)^{-1}=exp(\alpha x)$, or alternatively: $g(x)=exp(-\alpha x)$. By assuming, without loss of generality, $H_0=1$, the remaining moments for $H(y)$ follow. In fact, we have proved the following generalisation of lemma \ref{lem_1}, which gives a one parameter family of nonlocal diffusions on curved Riemannian manifolds, that have a natural representation as a quantum stochastic process:
\begin{lemma}\label{lem_2}
Given a quantum stochastic process defined by the following Kolmogorov forward equation:
\begin{equation}
\frac{\partial p}{\partial t}=-\frac{\alpha\sigma^2exp(-\alpha x)}{4}\frac{\partial p}{\partial x}+\sigma^2exp(-\alpha x)\sum_{k\geq 2}\frac{(-\varepsilon)^{k-2}}{k!}\frac{\partial^kp}{\partial x^k}
\end{equation}
The solution can be written as a nonlocal diffusion on a Riemannian manifold with metric: $g(x)=exp(-\alpha x)$. The moments of the ``blurring'' function $H(y)$, for $\alpha\neq 0$ are given by the relation:
\begin{equation}
H_{k-1}=\frac{2k(k-1)H_{k-2}-4\varepsilon^{(k-2)}}{k\alpha}
\end{equation}
where we are free to assume $H_0=1$. The moments for the case $\alpha=0$ reduce to the moments given for $H(y)$ in lemma \ref{lem_1}.
\end{lemma}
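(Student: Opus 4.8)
The plan is to carry out the coefficient-matching scheme already prepared in the text, specialise it to case 2, and then pin down exactly when the resulting (over-determined) system of equations is consistent.

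First I would take as given the two inputs supplied above: the curved-space nonlocal diffusion in Laplace--Beltrami form (\ref{Laplace_Belt}), whose Kramers--Moyal expansion is the expanded equation (\ref{KME_f_expanded}) with coefficients written explicitly in terms of $g^{-1}$, $\partial_x(g^{-1})$ and the moments $H_i$ of the blurring function; and the forward equation (\ref{fwd_eqn}) of the target quantum stochastic process, with unknowns $\mu(x)$, $f(x)$, $\varepsilon$. Matching the coefficient of $\partial p/\partial x$ immediately gives $\mu(x)=\tfrac{\sigma^2}{4}\,\partial_x(g^{-1})\,H_0$, and matching the coefficient of $\partial^k p/\partial x^k$ for each $k\geq 2$ reproduces the infinite family of first-order ODEs (\ref{g_pde}). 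No new work is required for this part.

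Next I would impose the case-2 ansatz $g(x)^{-1}=f(x)/\sigma^2$. Substituting $f=\sigma^2 g^{-1}$ into (\ref{g_pde}) and dividing by $g^{-1}$ (legitimate since $g^{-1}=1/g>0$) turns each equation into the linear form (\ref{g_pde_2}), equivalently $k(k-1)H_{k-2}-\tfrac{k}{2}\big(\partial_x(g^{-1})/g^{-1}\big)H_{k-1}=2\varepsilon^{k-2}$. The single step that really carries the lemma is the following: the logarithmic derivative $\partial_x(g^{-1})/g^{-1}$ is one and the same function for every $k$, so unless it is a constant — call it $\alpha$ — the family of equations has no common solution. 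Forcing $k$-independence is, after clearing denominators, precisely the recurrence $\alpha k H_{k-1}=2k(k-1)H_{k-2}-4\varepsilon^{k-2}$ stated in the lemma, and it also shows that $\alpha$ is not free: once $\alpha$ and $\varepsilon$ are chosen the moments are determined recursively. With $\partial_x(g^{-1})/g^{-1}=\alpha$ the whole tower collapses to $\partial_x(g^{-1})=\alpha g^{-1}$, whose solution is $g^{-1}(x)=Ce^{\alpha x}$; normalising the constant (equivalently taking $H_0=1$, which forces $H_1=0$ from the $k=2$ equation and then fixes the rest of the sequence) recovers the metric $g(x)=e^{-\alpha x}$ and, via $f=\sigma^2 g^{-1}$ together with the drift formula, the forward equation displayed in the statement.

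Finally I would dispatch the degenerate case $\alpha=0$: the recurrence then reads $2k(k-1)H_{k-2}=4\varepsilon^{k-2}$, i.e. $H_i=2\varepsilon^i/\big((i+1)(i+2)\big)$, and $\partial_x(g^{-1})=0$ gives $g\equiv 1$, so the construction degenerates to the Euclidean nonlocal diffusion and reproduces Lemma \ref{lem_1} exactly. The main obstacle is conceptual rather than computational: one must recognise that the infinite system (\ref{g_pde}) is over-determined for the single unknown $g^{-1}$, so that existence of a common solution simultaneously forces the exponential form of the metric and the moment recurrence. Secondary points needing care are the $k=2$ equation (which does not constrain $\alpha$ but instead fixes $H_1$ once $H_0$ is normalised, so one should not divide through by $H_1$ there), the harmlessness of the integration constant $C$, and the bookkeeping of the sign of $\alpha$, i.e. the orientation chosen for the exponential metric. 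As in Lemma \ref{lem_1}, the matching is performed at the level of formal power series in $\partial_x$, so the argument establishes the stated correspondence in that same formal sense.
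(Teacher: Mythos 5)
Your proposal is correct and follows essentially the same route as the paper, which proves this lemma implicitly in the text preceding it: matching the Kramers--Moyal expansion of the Laplace--Beltrami nonlocal diffusion (\ref{KME_f_expanded}) against the quantum forward equation (\ref{fwd_eqn}), imposing $f=\sigma^2 g^{-1}$ so that (\ref{g_pde}) collapses to (\ref{g_pde_2}), requiring the $k$-independent constant $\alpha$ to obtain $g^{-1}=e^{\alpha x}$ together with the moment recurrence, and checking that $\alpha=0$ recovers Lemma \ref{lem_1}. Your added observations --- that consistency of the over-determined system \emph{forces} the logarithmic derivative to be constant, and that the $k=2$ equation fixes $H_1=0$ rather than constraining $\alpha$ --- are correct refinements of the paper's argument but not a different method.
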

In the above lemma, we have turned back to finding $H(y)$ to fit a particular class of quantum stochastic processes. In the next section we show how, by specifying a non-zero connection: $\mathcal{A}_x$, and section $Q(x)$, we can write an associated quantum stochastic process if given the a probability density function: $H(y)$, such that the moments exist.
\newline
\newline
\underline{3) Solving the General Case:}
\newline
\newline
Expanding out the coefficients of the general Laplacian: (\ref{general_LP}), we get:
\begin{equation}
\frac{1}{g(x)}\frac{\partial^2}{\partial x^2}+\bigg(g(x)^{-1/2}\frac{\partial (g(x)^{-1/2})}{\partial x}+\frac{\mathcal{A}_x}{g(x)}\bigg)\frac{\partial}{\partial x}+\bigg(\frac{\mathcal{A}_x^2}{g(x)}+\frac{1}{g(x)}\frac{\partial (\mathcal{A}_x)}{\partial x}+Q(x)\bigg)
\end{equation}
Therefore by choosing:
\begin{equation}
\begin{split}
\mathcal{A}_x=-g^{1/2}\frac{\partial (g(x)^{-1/2})}{\partial x}\\
Q(f)=-\frac{\mathcal{A}_x^2}{g(x)}-\frac{1}{g(x)}\frac{\partial (\mathcal{A}_x)}{\partial x}
\end{split}
\end{equation}
the Laplacian operator is simplified to:
\begin{equation}
\frac{1}{g(x)}\frac{\partial^2}{\partial x^2}
\end{equation}
For a general nonlocal diffusion without drift, on a Riemannian manifold with metric $g(x)$, the Kolmogorov forward equation can therefore be written:
\begin{equation}
\frac{\partial p(x,t)}{\partial t}=\frac{1}{g(x)}\frac{\partial^2}{\partial x^2}\bigg(\int_{-\infty}^{\infty}H(y)p(x-y,t)dy\bigg)
\end{equation}
By changing coordinates to: $s=\int_{x_0}^{x} \frac{1}{\sqrt{g(y)}}dy$, this Kolmogorov forward equation is simplified to:
\begin{equation}\label{KFE_final}
\frac{\partial p'(s,t)}{\partial t}=\frac{1}{2}\frac{\partial^2}{\partial s^2}\bigg(\int_{-\infty}^{\infty}H(y)p'(s-y,t)\frac{dy}{\sqrt{g(y)}}\bigg)
\end{equation}
Where $p'(s,t)$ represents the probability law $p(x,t)$ in the new coordinate system. By matching the Riemannian metric such that the moments in the new coordinate system match those from \ref{lem_1}, we can write an associated quantum stochastic process. In other words, given a particular $H(y)$, where the moments of $H(y)$ exist, we match the geometry of the manifold on which the nonlocal diffusion occurs. We have the following:
\begin{proposition}\label{prop_3}
Given a probability distribution $H(y)$, such that the moments of $H(y)$ exist, then $H(y)$ defines a nonlocal diffusion on a Riemannian manifold, with the Kolmogorov forward equation: (\ref{KFE_final}). If the Riemannian metric satisfies the set of integral equations (for i=0 to $\infty$):
\begin{equation}\label{int_eqns}
\int_{-\infty}^{\infty} y^iH(y)\frac{dy}{\sqrt{g(y)}}=\frac{2(\varepsilon)^i}{(i+1)(i+2)}
\end{equation}
then the solution can be represented as the following quantum stochastic process:
\begin{equation}\label{QSDE}
j_t(X)=U_t^*(X\otimes \mathbb{I})U_t=\sigma(X)dA_t^{\dagger}+\sigma(X)dA_t+\varepsilon d\Lambda_t
\end{equation}
\end{proposition}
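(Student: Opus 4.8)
The statement packages together two assertions, and the plan is to prove them in turn. The first — that a probability distribution $H(y)$ with finite moments, together with a choice of Riemannian metric $g$, determines a nonlocal diffusion whose Kolmogorov forward equation is (\ref{KFE_final}) — is essentially bookkeeping on the construction of case 3 above. One starts from the general Laplacian (\ref{general_LP}), fixes the Abelian connection $\mathcal{A}_x$ and the line-bundle section $Q(x)$ exactly as prescribed there so that the operator collapses to $g(x)^{-1}\partial_x^2$, forms the associated nonlocal forward equation by replacing the density with its $H$-blur $\int H(y)p(x-y,t)\,dy$, and applies the isometric change of coordinate $s=\int_{x_0}^{x}g(y)^{-1/2}\,dy$. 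The only genuine check is that this change of variable sends $g(x)^{-1}\partial_x^2$ to $\tfrac12\partial_s^2$ with the Jacobian factor $g(y)^{-1/2}$ absorbed into the convolution kernel, which is precisely the content of (\ref{KFE_final}).

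For the second, substantive, assertion I would use the moment-matching / Kramers-Moyal strategy that produced Lemma \ref{lem_1}. On the quantum side, the candidate process (\ref{QSDE}) has $\alpha=\alpha^{\dagger}=\sigma(X)$ self-adjoint and mutually commuting and $\lambda=\varepsilon$ constant, so by (\ref{X^k}) its generator is of pure translation type, and the Accardi-Boukas derivation leading to (\ref{QBS})--(\ref{QBS_trans}) gives, with $r=0$, a backward equation of the translation type whose forward counterpart is of the form (\ref{fwd_eqn}) with drift $\mu(x)$ and volatility term $f(x)=\sigma(x)^2$. On the nonlocal side, apply the Kramers-Moyal expansion $p'(s-y,t)=\sum_{k\geq 0}\frac{(-1)^ky^k}{k!}\partial_s^kp'(s,t)$ to (\ref{KFE_final}) exactly as in the passage from (\ref{nonlocal}) to (\ref{KME}), pull the moments $H^g_k:=\int y^kH(y)g(y)^{-1/2}\,dy$ out of the convolution, and then expand the outer $\partial_s^2$ and reindex to obtain
\[\frac{\partial p'}{\partial t}=\sum_{j\geq 2}\frac{(-1)^jH^g_{j-2}}{2(j-2)!}\,\frac{\partial^jp'}{\partial s^j}.\]
Equating the coefficient of $\partial^jp'/\partial s^j$ here with that of $\partial^jp/\partial x^j$ in the quantum forward equation — the first-order term simultaneously fixing the drift $\mu$ and, together with the normalisation of $H$, the overall scale — collapses the infinite system to the single family $H^g_{j-2}=\frac{2\varepsilon^{j-2}}{(j-1)j}$, which is precisely (\ref{int_eqns}); when $g\equiv 1$ these are the moments of Lemma \ref{lem_1}. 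Hence under (\ref{int_eqns}) the two forward equations are literally the same linear evolution equation, and since both are solved from the same initial law the two densities agree for all $t$, so the nonlocal diffusion is represented by the quantum stochastic process (\ref{QSDE}).

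The main obstacle is twofold. Analytically, the Kramers-Moyal expansion is a formal power series, so matching all coefficients only upgrades to genuine equality of laws once one knows the expansions are meaningful and the solution of the resulting infinite-order equation is unique; this is why the moments of $H(y)$ must exist, but a fully rigorous treatment needs in addition either convergence of the series in a suitable function class or a determinacy (moment-problem) argument, which I would record as a standing regularity hypothesis in the spirit of \cite{Hicks}. Structurally, one still has to confirm that (\ref{QSDE}) is a bona fide Hudson-Parthasarathy flow, i.e.\ exhibit bounded operators $H$, $L$, $S$ (equivalently verify the $dA_t$, $dA_t^{\dagger}$, $d\Lambda_t$ boundary conditions attached to (\ref{dUt}) and (\ref{QBS})) consistent with $\alpha=\alpha^{\dagger}=\sigma(X)$ and $\lambda=\varepsilon$, and one must keep careful track of whether $\sigma(\cdot)$ and $g$ are being read in the original coordinate $x$ or the flat coordinate $s$, since identifying the position-dependent volatility $\sigma(X)$ with the geometry of the manifold is exactly the content of the change of variables borrowed from \cite{HL}.
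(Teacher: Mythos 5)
Your proposal is correct and takes essentially the same route as the paper: derive the backward and forward equations of the candidate process (\ref{QSDE}) via the Hudson--Parthasarathy multiplication rules and \cite{Hicks} Proposition 3.1, apply the Kramers--Moyal expansion to (\ref{KFE_final}), and equate coefficients of each partial derivative to recover (\ref{int_eqns}). Your added caveats (the formal nature of the expansion, uniqueness/moment-determinacy, and tracking whether $\sigma$ and $g$ are read in the $x$ or the $s$ coordinate) go beyond the paper's own entirely formal proof, which in fact silently inserts $\sigma^2(x)$ into the expansion of (\ref{KFE_final}) without comment.
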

\begin{proof}
Applying the multiplication rules of the Hudson-Parthasarathy stochastic calculus to equation (\ref{QSDE}), we get the following Kolmogorov backward equation:
\begin{equation}
\frac{\partial u(x,t)}{\partial t}+\sigma^2(x)\sum_{k=2}^{\infty}\frac{\varepsilon^{k-2}}{k!}\frac{\partial^k (u(x,t))}{\partial x^k}=0
\end{equation}
From \cite{Hicks} proposition 3.1, the associated forward equation is given by:
\begin{equation}
\frac{\partial p(x,t)}{\partial t}=\sum_{k\geq 2}\frac{(-1)^k\varepsilon^{k-2}}{k!}\frac{\partial^k (\sigma^2(x)(p(x,t))}{\partial x^k}
\end{equation}
We can now apply the Kramers-Moyal expansion to equation (\ref{KFE_final}) to get:
\begin{equation}
\frac{\partial p(x,t)}{\partial t}=\frac{1}{2}\bigg(\sum_{k\geq 2}\int_{-\infty}^{\infty}y^{(k-2)}H(y)\frac{dy}{\sqrt{g(y)}}\frac{(-1)^{(k-2)}}{(k-2)!}\frac{\partial^k (\sigma^2(x)p(x,t))}{\partial x^k}\bigg)
\end{equation}
The result follows by equating coefficients of each partial derivative.
\end{proof}
In the classical approach to finance, Henry-Labord\`ere (in \cite{HL}) shows how the local volatility model of Dupire (see \cite{Dupire}) is equivalent to introducing a non-Euclidean distance metric $g(x)$, to a Gaussian process. Equation (\ref{int_eqns}) represents the same for a nonlocal diffusion, and equation (\ref{QSDE}) for quantum diffusions. 
\section{Path Integral Approach}
In this section we seek to adapt the path integral approach outlined in \cite{Baaquie_1}-\cite{Baaquie_4}, and \cite{Linetsky}, to the Accardi-Boukas option pricing framework. This method has a number of benefits.

In some respects, this method is more fundamental than the conventional strategy of finding a classical Ito process that fits historical data, before solving the associated Kolmogorov backward equation. The path integral method provides a means to build the model from the underlying physical laws controlling a system via the relevant Hamiltonian function. The fact that the solution can be modelled using a Wiener process, and Gaussian kernel functions is an output of the model, rather than an input assumption. Furthermore, a number of different effects can be included in any model via the introduction of different potential functions. Baaquie provides an interesting application of this idea, in relation to the modelling of commodity prices, in \cite{Baaquie_2}.

The second key benefit relates to the fact that it suggests alternative ways to simulate quantum stochastic processes \& nonlocal diffusions. A classical stochastic process can be easily simulated using random numbers generated by a Gaussian kernal. All the information required by each Monte-Carlo path is usually contained within the path itself. However, in order to apply the standard Gaussian kernel function to simulate quantum stochastic processes/nonlocal diffusions, one must write the solution as a McKean stochastic process (see equation (\ref{McKean_SDE})). Each path requires information from the positions of all other paths in order to make the next step, and this therefore requires the particle method (see \cite{Guyon} for a description of the method, and \cite{Hicks} for the application to quantum stochastic processes). By deriving a non-Gaussian kernel function, one will be able to simulate the quantum process/nonlocal diffusion as a conventional stochastic process, without using the particle method.
\subsection{First Attempt}
We start with the quantum Kolmogorov backward equation:
\begin{equation}\label{H1}
\frac{\partial u}{\partial t}+\sigma^2\sum_{k\geq 2} \frac{\varepsilon^{(k-2)}}{k!}\frac{\partial^k u}{\partial x^k}=0
\end{equation}

A standard approach (for example outlined in \cite{Hall}, chapter 20), consists of applying a Wick rotation: $t\rightarrow i\tau$, to the Schr{\"o}dinger equation:
\vspace{3mm}

$i\frac{\partial \psi}{\partial t}=\hat{H}\psi$, $\hat{H}=\frac{\hat{P}^2}{2m}$, and $\hat{P}=-i\frac{\partial}{\partial x}$
\vspace{3mm}

to get the standard heat equation:
\vspace{3mm}

$\frac{\partial \psi}{\partial \tau}=\frac{-1}{2m}\frac{\partial^2 \psi}{\partial x^2}$.
\vspace{3mm}

Applying this technique for equation (\ref{H1}), we use this transformation on a Hamiltonian of the form:
\begin{equation}\label{H2}
\hat{H} = \frac{1}{m}\sum_{k\geq 2} \frac{\varepsilon^{k-2}\hat{P}^k}{k!}, \hat{P}=-i\frac{\partial}{\partial x}
\end{equation}

This leads to the following partial differential equation:
\begin{equation}
-\frac{\partial \psi}{\partial \tau}=\sum_{k\geq 2} \frac{(-i^k)\varepsilon^{k-2}}{k!}\frac{\partial^k \psi}{\partial x^k}
\end{equation}

In addition to this equation not matching equation (\ref{H1}), Hamiltonian (\ref{H2}) is non-Hermitian. The question then arises as to whether this situation can be rectified. In quantum mechanics, the requirement: $\hat{H}=\hat{H}^{\dagger}$ guarantees firstly that the spectrum of any Hamiltonian is real, and secondly that the time-evolution operator: $e^{i\hat{H}t}$ is unitary. Non-Hermitian Hamiltonians often have the problem that probability mass is not conserved. In fact, in \cite{Bender} Bender shows that whilst the Hermiticity condition is sufficient to ensure these two requirements are met, it is not necessary. Bender shows that an alternative sufficient condition is $\mathcal{PT}$ symmetry. Based on the analysis from \cite{Bender}, we show how this applies in the current case, in the next subsection.

Most of the examples of $\mathcal{PT}$ symmetric quantum mechanics, discussed by Bender, involve potential terms that violate the usual Hermiticity requirement but still meet the $\mathcal{PT}$ symmetry requirement he derives. Hamiltonian (\ref{H2}) has a non-Hermitian kinetic energy term. This would cause difficulty in physical terms, for example in meeting experimentally observed relativistic invariance. However, for probability modelling it is a viable model.

\subsection{$\mathcal{PT}$ Symmetric Quantum Mechanics}
With $\varepsilon=0$ in (\ref{H2}), we have $H=H^{\dagger}$, where $\dagger$ represents the Dirac Hermitian conjugate, and we can use conventional quantum mechanics based on the inner product defined by:
\vspace{3mm}

$\langle \psi|\phi\rangle=\int\psi(x)^*\phi(x)dx$.
\vspace{3mm}

Let $\mathcal{P}$ represent a space reflection, and $\mathcal{T}$  a time reflection, and let $H^{\mathcal{PT}}$ represent the Hamiltonian under a combined space \& time reflection. Since for $\varepsilon\neq 0$ we no longer have: $H=H^{\dagger}$, the question arises as to whether we can use the condition $H=H^{\mathcal{PT}}$ instead. The basic requirements to build a viable quantum mechanics on this condition are:
\begin{itemize}
\item Real and Symmetric Hamiltonian ($H=H^{\mathcal{PT}}$).
\item An inner product that defines a positive norm: $||\psi||^2=\langle \psi|\psi\rangle$.
\item A unitary time development operator
\end{itemize}

The analysis outlined in \cite{Bender}, shows how to do this. For (i), it is enough to show that the Hamiltonian operator $\hat{H}$, and $\mathcal{PT}$ share a common set of eigenfunctions (see \cite{Bender}, section II). For a linear operator: $\hat{A}$, this condition is met if $\hat{A}$ commutes with $\hat{H}$: $[\hat{A},\hat{H}]=0$. However, in this case $\mathcal{PT}$ is nonlinear. Therefore, we need to find a linear operator: $\mathcal{C}$ that commutes with both $\hat{H}$, and $\mathcal{PT}$. We can then write:
\begin{equation}\label{H3}
\langle \psi|\phi\rangle^{\mathcal{CPT}}=\int\psi^{\mathcal{CPT}}(x)\phi(x)dx=\int\mathcal{C}\psi^*(-x)\phi(x)dx
\end{equation}

To find $\mathcal{C}$, following \cite{Bender}, we write: $\mathcal{C}=e^{Q(\hat{x},\hat{p})}\mathcal{P}$, and use the fact that any $Q(\hat{x},\hat{p})$ that meets the following conditions, will meet our requirements:
\begin{itemize}
\item $[\mathcal{C},\mathcal{PT}]=0$
\item $[\mathcal{C},H]=0$
\item $\langle \psi|\psi\rangle^{\mathcal{CPT}}>0$
\end{itemize}
In our case, any function: $Q(\hat{x},\hat{p})=f(\hat{p})$, will commute with $\hat{H}$ given by (\ref{H2}), and $\mathcal{PT}$. Further, in this case it is sufficient for $\mathcal{C}^2=1$ for the third requirement. Therefore, we can choose $\mathcal{C}=\mathcal{P}$, and (\ref{H3}) becomes:
\begin{equation}\label{H4}
\langle \psi|\phi\rangle^{\mathcal{CPT}}=\int\psi^{\mathcal{CPT}}(x)\phi(x)dx=\int\psi^*(x)\phi(x)dx
\end{equation}

It should be noted at this point that where there is a non-zero potential function in the Hamiltonian:
\begin{equation}\label{H5}
\hat{H} = \frac{1}{m}\sum_{k\geq 2} \frac{\varepsilon^{k-2}\hat{P}^k}{k!}+V(\hat{X})
\end{equation}

Then, the $\mathcal{PT}$ symmetry requirements will place restrictions on the function $V(\hat{X})$, and lead to different choices for the operator $\mathcal{C}$. For example, if $V(\hat{X})$ is a real function of $\hat{X}$, then it must be an even function, in order to maintain $\mathcal{PT}$ symmetry. For further details, we refer to \cite{Bender}.
\subsection{Second Attempt}
Now that we are confident that the non-Hermitian Hamiltonian given by equation (\ref{H2}) does lead to a viable quantum mechanics based on $\mathcal{PT}$ symmetry, we return to the question around how to construct a path integral. The fact that the Hamiltonian is symmetric under combined space \& time reflections, rather than time alone or space alone, suggests that our Wick rotation must be extended to the space dimension, as well as the time dimension. Therefore, we try: $(s,\tau)=(ix,it)$. Under this new rotation, the Schr{\"o}dinger equation:
\begin{equation}\label{H6}
i\frac{\partial \psi}{\partial t}=\sum_{k\geq 2} \frac{(-i^k)\varepsilon^{(k-2)}}{k!}\frac{\partial^k \psi}{\partial x^k}
\end{equation}
is transformed to the original quantum Kolmogorov backward equation as required:
\begin{equation}\label{H7}
\frac{\partial \psi}{\partial \tau}+\sum_{k\geq 2} \frac{\varepsilon^{(k-2)}}{k!}\frac{\partial^k \psi}{\partial s^k}=0
\end{equation}

We can now proceed with the construction of the path integral, and calculation of the required kernel function.

It is worth highlighting at this point, that in the above analysis, we have made the assumption $\hbar=1$, in translating the usual observables from physics. In quantum physics, this parameter defines the degree to which the non-commutativity of quantum mechanics impacts real observations. Algebraically speaking, when using Hamiltonian (\ref{H2}) for probability modelling, this parameter will be absorbed into the volatility $\sigma$ and the parameter $\varepsilon$. The relative impact of the quantum non-commutativity in our case is measured by the ratio: $\frac{\varepsilon}{\sigma^2\delta t}$. This is discussed further below.
\subsection{Fourier Transform Method}
Following the usual steps (for example outlined in \cite{Hall} chapter 20), we start by assuming a solution to the Schr{\"o}dinger equation: $\psi (x,t)=exp(i(px-\omega(p)t))$, and insert this into:
\begin{equation}
i\frac{\partial\psi}{\partial t}=\sum_{k\geq 2} \frac{(-i)^k\varepsilon^{k-2}}{k!m}\frac{\partial^k \psi}{\partial x^k}
\end{equation}
We obtain:
\begin{equation}
\omega(p)=i\sum_{k\geq 2}\frac{\varepsilon^{k-2}p^k}{k!m}
\end{equation}
Inserting the normalising constant, we can write the wave function at $t$ as:
\begin{equation}
\psi(x,t)=\frac{1}{\sqrt{2\pi}}\int_\mathbb{R}\widetilde{\psi_0}(p)exp\Big(i\big(px-\sum_{k\geq 2} \frac{\varepsilon^{k-2}t}{k!m}p^k\big)\Big)dp
\end{equation}
Taking the inverse Fourier transform, enables us to define an non-Gaussian integral for the path integral approach:
\begin{equation}\label{H8_kernel}
K^{\varepsilon}_t(x)=\mathcal{F}^{-1}\Bigg(exp\bigg(\frac{-it}{m}\sum_{k\geq 2} \frac{\varepsilon^{k-2}p^k}{k!}\bigg)\Bigg)
\end{equation}
Where for $\varepsilon=0$, $K^0_t(x)=\sqrt{\big(\frac{m}{2\pi it}\big)}exp\big(\frac{imx^2}{2t}\big)$. Since we have been working in momentum space, the kernel function is defined on arbitrary input $x$, by taking the inverse Fourier transform. However, we are still tracking the input ``parameter'' time: $t$. Replacing $\tau=it$, in the $\varepsilon=0$ case, leads to the usual Wiener measure.
\subsection{Legendre Transform Method}
In this section we show how to derive the required integral kernel using the standard Legendre transformation method (for example, see \cite{Folland} chapter 8). In addition to providing further insight, this has the benefit that it results in a power series solution without requiring the inverse Fourier transform. We assume the existence of ``generalised" states $|x_0\rangle$ for a particle with position $x_0$, and wave-function given by $\psi(x)=\delta(x_0-x)$. Further, we assume the existence of a generalised momentum state: $|p_0\rangle$, with wave-function given by $\psi(x)=e^{ip_0x}$. As noted in \cite{Folland}, it is possible to apply the theory of distributions in making this argument rigorous, although in this section we proceed on a formal basis.

For a system with Hamiltonian $\hat{H}$ and in generalised state $|x_1\rangle$, the probability of finishing in generalised state $|x_2\rangle$, after time $\delta t$ is given by: $\langle x_2|exp(-i\delta t\hat{H})|x_1\rangle$. We now proceed by inserting Hamiltonian (\ref{H2}) and then expanding over momentum states as follows:
\begin{equation}\label{H9}
\langle x_2|exp(-i\delta t\hat{H})|x_1\rangle=\int_{\mathbb{R}} dp\bigg(exp\Big(-i\delta t\sum_{k\geq 2}\frac{\varepsilon^{(k-2)}p^k}{k!m}\Big)\langle x_2|p\rangle\langle p|x_1\rangle\bigg)
\end{equation}
We have: $\langle x_2|p\rangle=\frac{1}{\sqrt{2\pi}}\int_{\mathbb{R}} \delta(x_2-x)e^{ipx}dx$, and $\langle p|x_1\rangle=\frac{1}{\sqrt{2\pi}}\int_{\mathbb{R}} e^{-ipx}\delta(x_1-x)dx$, so  $\langle x_2|p\rangle\langle p|x_1\rangle=\frac{1}{\sqrt{2\pi}}e^{ip(x_2-x_1)}$.

Feeding this back into (\ref{H9}), we get:
\begin{equation}\label{H10}
\langle x_2|exp(-i\delta t\hat{H})|x_1\rangle=\frac{1}{2\pi}\int_{\mathbb{R}} dp\bigg(exp\Big(i\Big(p(x_2-x_1)-\delta t\sum_{k\geq 2}\frac{\varepsilon^{(k-2)}p^k}{k!m}\Big)\Big)\bigg)
\end{equation}
Assuming, we have a stochastic process that makes $N$ steps as outlined above, each with time-step $\delta t$, we get:
\begin{equation}\label{H11}
\langle x_N,t_N|x_0,t_0\rangle=\int_{\mathbb{R^N}}\prod_{j=1}^{N-1}dx_j\Big(\langle x_{N}|e^{-i\hat{H}\delta t}|x_{N-1}\rangle\langle x_{N-1}|e^{-i\hat{H}\delta t}|x_{N-2}\rangle...\langle x_{1}|e^{-i\hat{H}\delta t}|x_{0}\rangle\Big)
\end{equation}
We now take the limit $N\rightarrow \infty$, and $\delta t\rightarrow 0$. To do this we require: $\mathcal{D}x=\lim_{N\to\infty}\prod_{j=1}^{N} dx_j$, and $\mathcal{D}p=\lim_{N\to\infty}\prod_{j=1}^{N} dp_j$, where the integral is over all paths from the initial, to the final states. It is possible (for example see \cite{GlimmJaffe}) to give this integral a rigorous definition, and prove convergence. However, as noted above, we proceed on a purely formal basis. Inserting (\ref{H10}) into (\ref{H11}), and taking the limit we get:
\begin{equation}\label{H12}
\langle x_N,t_N|x_0,t_0\rangle=\frac{1}{2\pi}\int \mathcal{D}x\mathcal{D}p\bigg(exp\bigg( i\int_{t_0}^{t_N}dt\Big(p(t)\dot{x}(t)-\hat{H}(x,p)\Big)\bigg)\bigg)
\end{equation}
Where we have introduced the notation: $\dot{x}=\frac{dx}{dt}$.

To carry out the momentum integral, we use the method of stationary phase (see for example \cite{BenderOrszag} section 6.5). This enables us to develop the central result for this section - a Feynman-Kac formula for the Quantum Kolmogorov backward equation given by equation (\ref{H1}).
\begin{proposition}\label{Kernel}
If $u(x,t)$ is a solution of the quantum Kolmogorov backward equation, given by (\ref{H1}), then we have:
\begin{equation}
\mathbb{E}\big[u(x,T)|u(x_0,0)]=\int_{-\infty}^{\infty} u(x,0)K^{\varepsilon}_T(x)dx
\end{equation}
Where $K^{\varepsilon}_T(x)$ is given by:
\begin{equation}\begin{split}
K^{\varepsilon}_T(x)=\int_{-\infty}^{\infty} exp\Big(-\mathcal{A}(\dot{x})\Big)\mathcal{D}x\\
\mathcal{A}(\dot{x})=\int_{0}^{T}\sum_{k\geq 0} \frac{(-\varepsilon)^k(\dot{x})^{(k+2)}}{\sigma^{2(k+1)}(k+1)(k+2)}dt
\end{split}
\end{equation}
$\mathcal{D}x$ is given by:
\begin{equation}
\mathcal{D}x=C_0\lim_{N\to\infty}\prod_{j=1}^{N} dx_jI(\sigma^2\delta t,\varepsilon,\dot{x})
\end{equation}
$I(\sigma^2\delta t,\varepsilon,\dot{x}))$ represents the integral:
\begin{equation}
I(\sigma^2\delta t,\varepsilon,\dot{x})=\int^{\infty}_{-\infty} dp\bigg(\exp\Big(-\sigma^2\delta te^{\varepsilon p_0(\dot{x})}\sum_{k\geq 2}\frac{\varepsilon^{k-2}(p-p_0(\dot{x}))^k}{k!} \Big)\bigg)
\end{equation}
Where $C_0$ represents a normalisation constant. For $\big(\frac{\varepsilon}{\sigma^2\delta t}\big)$ small, we have:
\begin{equation}
\mathcal{D}x\sim C_0\lim_{N\to\infty}\prod_{j=1}^{N} \frac{dx_j}{\sqrt{\sigma^2\delta t(1+\frac{\varepsilon\dot{x}}{\sigma^2})}}
\end{equation}
\end{proposition}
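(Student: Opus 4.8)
The plan is to evaluate the momentum integral in the phase-space path integral (\ref{H12}) slice-by-slice by the method of stationary phase, exactly along the lines of the standard Legendre-transform derivation of the Wiener measure (see \cite{Folland} chapter 8), reading the action off as the Legendre transform of the Hamiltonian symbol and the path measure off from the Gaussian-type fluctuation integrals about the critical momentum.

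Concretely, with $1/m=\sigma^2$ the symbol of (\ref{H2}) is $\hat H(p)=\sigma^2\sum_{k\ge2}\varepsilon^{k-2}p^k/k!=\sigma^2\varepsilon^{-2}(e^{\varepsilon p}-1-\varepsilon p)$, so $\hat H'(p)=\sigma^2\varepsilon^{-1}(e^{\varepsilon p}-1)$ and $\hat H^{(k)}(p)=\sigma^2\varepsilon^{k-2}e^{\varepsilon p}$ for $k\ge2$. Starting from the discretised propagator (\ref{H11})--(\ref{H10}) and passing through the double Wick rotation of section 4.3 so that what is propagated is the solution of (\ref{H1}), each elementary factor reduces to a one-dimensional integral over the momentum whose phase is stationary at the point $p_0=p_0(\dot x)$ fixed by $\hat H'(p_0)=\dot x$, equivalently by
\[
e^{\varepsilon p_0(\dot x)}=1+\frac{\varepsilon\dot x}{\sigma^2},\qquad p_0(\dot x)=\frac1\varepsilon\log\!\Big(1+\frac{\varepsilon\dot x}{\sigma^2}\Big).
\]
Writing $p=p_0+q$ and Taylor-expanding, the relations above give the exact identity
\[
\hat H(p_0+q)-(p_0+q)\dot x=\big(\hat H(p_0)-p_0\dot x\big)+\sigma^2e^{\varepsilon p_0}\sum_{k\ge2}\frac{\varepsilon^{k-2}q^k}{k!},
\]
which factors the slice into a constant piece carrying $\hat H(p_0)-p_0\dot x$ times precisely the fluctuation integral $I(\sigma^2\delta t,\varepsilon,\dot x)$ of the statement. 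Note $I$ converges as a genuine real integral: the exponent tends to $-\infty$ both as $q\to+\infty$ (where $e^{\varepsilon q}$ dominates) and as $q\to-\infty$ (where the linear term in $\varepsilon^{-2}(e^{\varepsilon q}-1-\varepsilon q)$ dominates).

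It then remains to identify the constant piece as the action. Using $e^{\varepsilon p_0}-1=\varepsilon\dot x/\sigma^2$ one gets $\hat H(p_0)=\dot x/\varepsilon-\sigma^2 p_0/\varepsilon$, hence the Legendre transform
\[
p_0\dot x-\hat H(p_0)=p_0\Big(\dot x+\frac{\sigma^2}{\varepsilon}\Big)-\frac{\dot x}{\varepsilon}=\frac1\varepsilon\log\!\Big(1+\frac{\varepsilon\dot x}{\sigma^2}\Big)\Big(\dot x+\frac{\sigma^2}{\varepsilon}\Big)-\frac{\dot x}{\varepsilon};
\]
expanding the logarithm and collecting the coefficient of $\varepsilon^{k}\dot x^{k+2}\sigma^{-2(k+1)}$, the two resulting sums combine through the elementary identity $\frac{(-1)^{i-1}}{i}+\frac{(-1)^{i}}{i+1}=\frac{(-1)^{i-1}}{i(i+1)}$ to yield exactly $\sum_{k\ge0}(-\varepsilon)^k\dot x^{k+2}/[\sigma^{2(k+1)}(k+1)(k+2)]$, i.e. the integrand of $\mathcal A(\dot x)$. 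Multiplying the $N$ slice factors and letting $N\to\infty$, $\delta t\to0$ turns the product of the constant pieces into $\exp(-\mathcal A(\dot x))$ and the product of fluctuation integrals, together with a bookkeeping normalisation $C_0$, into $\mathcal Dx=C_0\lim_{N\to\infty}\prod_j dx_j\,I(\sigma^2\delta t,\varepsilon,\dot x)$, which gives $K^\varepsilon_T(x)=\int e^{-\mathcal A(\dot x)}\mathcal Dx$; the Feynman--Kac identity $\mathbb E[u(x,T)|u(x_0,0)]=\int u(x,0)K^\varepsilon_T(x)\,dx$ follows by letting this propagator act on the data as in (\ref{H9}). Finally, for $\varepsilon/(\sigma^2\delta t)$ small the $k=2$ term dominates the exponent of $I$, so $I\sim\sqrt{2\pi/(\sigma^2\delta t\,e^{\varepsilon p_0})}=\sqrt{2\pi/(\sigma^2\delta t(1+\varepsilon\dot x/\sigma^2))}$, and absorbing $\sqrt{2\pi}$ into $C_0$ gives the stated asymptotic; setting $\varepsilon=0$ recovers $\mathcal A=\int_0^T\dot x^2/(2\sigma^2)\,dt$ and the Wiener measure.

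I expect the main obstacle to be not any single computation but keeping the signs and integration contours consistent through the steps — in particular, the double Wick rotation of section 4.3 is exactly what is needed to make the stationary point $p_0(\dot x)$ real and the fluctuation integral $I$ a convergent real integral, since (\ref{H1}) is a backward equation (its kernel is most naturally the propagator in the time-to-maturity variable rather than in $t$); the orientation of time and the simultaneous rotation of $x$ and $p$ have to be tracked with care to land on $e^{-\mathcal A}$ with the $(-\varepsilon)^k$ in the action. Alongside this are the familiar path-integral issues — interchanging the momentum integration with the infinite series defining $\hat H$ and with the $N\to\infty$ limit — which, as elsewhere in this section and following \cite{Folland} and \cite{GlimmJaffe}, I would carry out on a formal basis, indicating where a rigorous treatment via distributions (or the methods of \cite{GlimmJaffe}) would be required.
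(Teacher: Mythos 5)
Your proposal is correct and follows essentially the same route as the paper's proof: the discretised phase-space integral (\ref{H11})--(\ref{H12}) is evaluated slice-by-slice by stationary phase in the momentum, with $p_0(\dot x)=\varepsilon^{-1}\ln(1+\varepsilon\dot x/\sigma^2)$, the fluctuation integral $I(\sigma^2\delta t,\varepsilon,\dot x)$ absorbed into $\mathcal{D}x$, the Legendre transform expanded in powers of $\varepsilon$ to give the action $\mathcal{A}(\dot x)$, and the Gaussian approximation of $I$ for $\varepsilon/(\sigma^2\delta t)$ small. Your use of the closed form $\sigma^2\varepsilon^{-2}(e^{\varepsilon p}-1-\varepsilon p)$ for the symbol, the explicit convergence check on $I$, and the detailed series bookkeeping are presentational refinements of the same argument rather than a different approach.
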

\begin{proof}
We start with the integral, (\ref{H12}), in $N$ discrete steps:
\begin{equation}\label{integral}
\frac{1}{2\pi}\int_{-\infty}^{\infty} \prod_{i=1}^{N}dx_i\prod_{i=1}^{N}dp_j\bigg(exp\Big(i\sum\delta t\big(h(p)\big)\Big)\bigg)
\end{equation}
Where $h(p)=p\dot{x}-\sigma^2\sum_{k\geq 2} \frac{\varepsilon^{(k-2)}p^k}{k!}$. We now require the Legendre transformation, by first finding the stationary point $p_0$ such that $h'(p_0)=0$.
\begin{equation}\label{h'(p)}
h'(p)=\frac{\delta x_i}{\delta t}-\sigma^2\sum_{k\geq 2} \frac{\varepsilon^{(k-2)}p^{(k-1)}}{(k-1)!}
\end{equation}
Therefore, we get:
\begin{equation}\label{p_0}
p_0=\frac{ln(\frac{\varepsilon\dot{x}}{\sigma^2}+1)}{\varepsilon}
\end{equation}
The Legendre transformation for $H(p)$ can therefore be written:
\begin{equation}
\begin{split}
L(\dot{x})=p_0(\dot{x})\dot{x}-\sigma^2\sum_{k\geq 2} \frac{\varepsilon^{(k-2)}p_0(\dot{x})^k}{k!}\\
p_0(\dot{x})=\frac{ln(\frac{\varepsilon\dot{x}}{\sigma^2}+1)}{\varepsilon}
\end{split}
\end{equation}
We now expand $h(p)$ about the stationary point, in (\ref{integral}):
\begin{equation}\label{integral_2}
\frac{1}{2\pi}\int_{-\infty}^{\infty} \prod_{i=1}^{N}dx_i\prod_{i=1}^{N}dp_j\bigg(exp\Big(i\delta t\Big(h(p_0)+\sum_k \frac{(p-p_0)^k}{k!}\frac{\partial^k h}{\partial p^k}\Big\vert_{p=p_0}\Big)\Big)\bigg)
\end{equation}
From above, for $k\geq 2$, $\frac{\partial^k h}{\partial p^k}=\varepsilon^{k-2}e^{\varepsilon p_0}$, and $e^{\varepsilon p_0}=1+\frac{\varepsilon\dot{x}}{\sigma^2}$. Inserting this into integral (\ref{integral_2}), and making the usual Wick rotation, gives:
\begin{equation}\label{integral_3}
\frac{1}{2\pi}\int_{-\infty}^{\infty} \prod_{i=1}^{N}dx_i I(\sigma^2\delta t,\varepsilon,\dot{x})\bigg(exp\Big(-\delta t\Big(h(p_0)\Big)\Big)\bigg)
\end{equation}
Where:
\begin{equation}
I(\sigma^2\delta t,\varepsilon,\dot{x})=\int^{\infty}_{-\infty} dp\bigg(\exp\Big(-\sigma^2\delta te^{\varepsilon p_0(\dot{x})}\sum_{k\geq 2}\frac{\varepsilon^{k-2}(p-p_0(\dot{x}))^k}{k!} \Big)\bigg)
\end{equation}
as required. For $\varepsilon<\sigma^2\delta t$, we can approximate $I(\sigma^2\delta t,\varepsilon,\dot{x})$ using the standard Gaussian integral:
\begin{equation}
I(\sigma^2\delta t,\varepsilon,\dot{x})\sim \sqrt{\frac{\pi}{\sigma^2\delta te^{\varepsilon p_0}}}
\end{equation}
The Lagrangian is given by:
\begin{equation}
L(\dot{x})=p_0\dot{x}-\sigma^2\sum_{k\geq 2} \frac{\varepsilon^{(k-2)}p_0^k}{k!}
\end{equation}
We can write this as:
\begin{equation}
L(\dot{x})=\frac{1}{\varepsilon}ln(\frac{\varepsilon\dot{x}}{\sigma^2}+1)\dot{x}-\frac{\sigma^2}{\varepsilon^2}(e^{\varepsilon p_0}-\varepsilon p_0-1)
\end{equation}
Expanding in escalating powers of $\varepsilon$ we find:
\begin{equation}
L(\dot{x})=\sum_{k\geq 0} \frac{(-\varepsilon)^k(\dot{x})^{(k+2)}}{\sigma^{2(k+1)}(k+1)(k+2)}
\end{equation}
Pulling everything together, and taking the infinitesimal limit (established in \cite{GlimmJaffe}) we have:
\begin{equation}\begin{split}\label{final}
\langle x_N,t_N|x_0,t_0\rangle=\int \mathcal{D}x\bigg(exp\Big(i\int_{t_0}^{t_N}dt\big(L(\dot{x})\big)\Big)\bigg)\\
L(\dot{x})=\sum_{k\geq 0} \frac{(-\varepsilon)^k(\dot{x})^{(k+2)}}{\sigma^{2(k+1)}(k+1)(k+2)}\\
\mathcal{D}x\sim \lim_{N\to\infty}\prod_{j=1}^{N} dx_j\bigg(\frac{C_0}{2\pi\sigma^2\delta t(1+\varepsilon\dot{x})}\bigg)^{1/2}
\end{split}
\end{equation}
After applying the usual Wick rotation, equation (\ref{final}) gives us the probability density for $x$ that can be used to derive the solution: $u(x,t)$ required (where $C_0$ is a normalisation constant).
\end{proof}

The form of the Lagrangian in proposition \ref{Kernel}, illustrates the fact that the strength of the quantum interaction is driven by the ratio: $\frac{\varepsilon}{\sigma^2\delta t}$. In this case the Lagrangian includes the usual quadratic term in $\frac{(\delta x)^2}{\sigma^2\delta t}$. The higher order terms have the multiplier: $\frac{(-\varepsilon)^k}{(\sigma^2\delta t)^{k+1}}$, increasing by a factor of $\frac{\varepsilon}{\sigma^2\delta t}$ with each increase in $k$. Therefore, where this ratio is $<<1$, the impact of the quantum interaction is expected to be small, and the process will be well approximated by a classical process.
\section{Numerical Results \& Future Work}
\subsection{Numerical Results}\label{ss:5.1}
Numerical simulation of quantum stochastic processes is challenging, due to the singular nature of the partial differential equations governing their behaviour. In \cite{Hicks}, the author outlines how the Particle method may be used to simulate their solutions. In this paper, we have shown how to develop a kernel function that can be used to numerically integrate solutions, although the accurate calculation of the kernel function is itself challenging. In this section we briefly illustrate some of the key behaviours observed.

The first chart below shows the kernel function given by proposition \ref{Kernel}, for a 1 day simulation of a process with $\sigma=0.2$.
\newline
\includegraphics[scale=0.6]{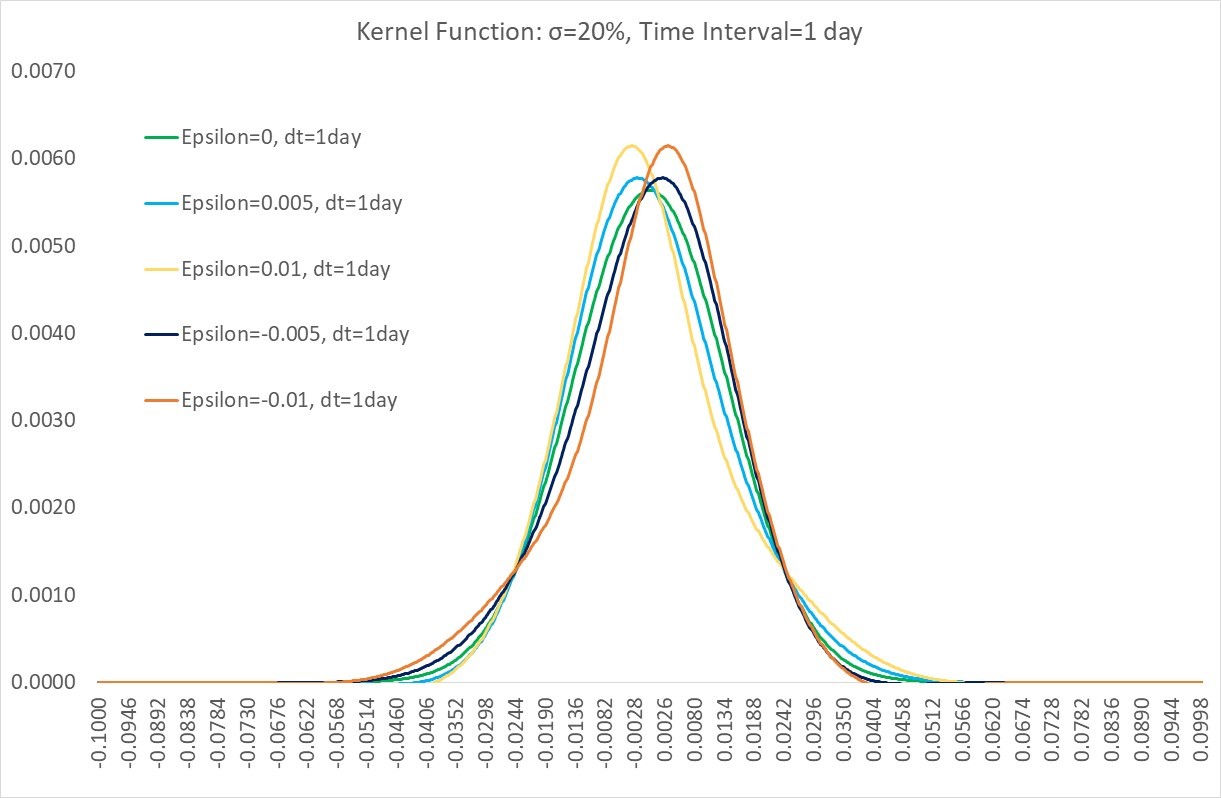}

The green curve represents a standard Gaussian process ($\varepsilon=0$). We show the results for $\varepsilon=+/-0.005$, and $+/-0.01$. Introducing a non-zero translation has had the effect of introducing skew to the distribution, as well as increasing the kurtosis (so called ``fat tails"). The yellow/orange curves show the results with translation $\varepsilon=+/-0.01$, and the effect is the strongest.
These results are consistent with the results, obtained using the particle method, from \cite{Hicks}. There are many means to introduce such effects into the simulation of traded market prices. For example via classical local volatility (see for example \cite{Dupire}), or stochastic volatility (see for example \cite{Heston}) models. The key difference in this case is that effects such as downside skew, and ``fat tails" observed for real in the market, are an output of simple physical assumptions, rather than introduced exogenously. The behaviours of the market are reflected in the Hamiltonian function, and the resulting integral kernel, rather than applied via the choice of the volatility parameter.

The next chart shows how the same kernels after a 1 year time period. As noted above, the ratio $\frac{\varepsilon}{\sigma^2\delta t}$ is a measure of the degree to which the quantum effects will be observed. As the time parameter $t$ tends to infinity, the kernel function will asymptotically tend to a Gaussian kernel. In fact this reproduces another problematic observation of the behaviour of real financial markets, namely the flattening of the forward skew.
\newline
\includegraphics[scale=0.6]{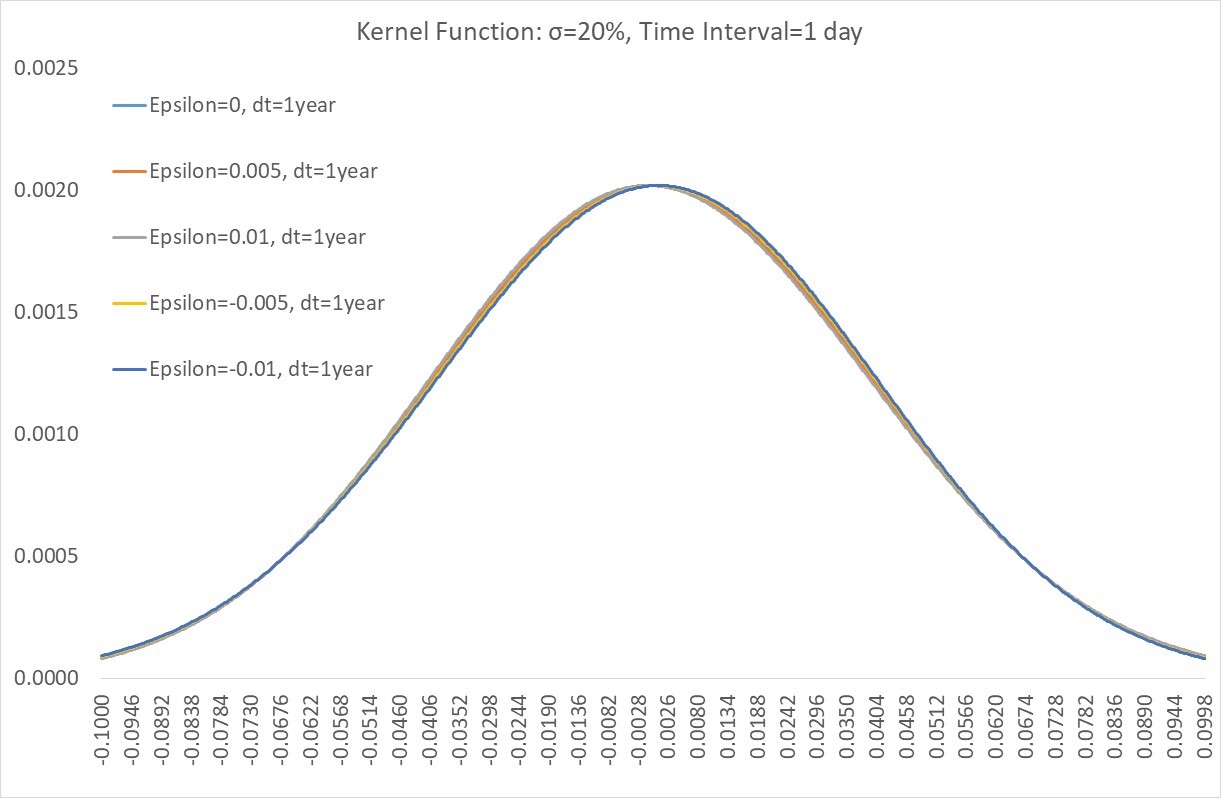}

The local volatility model shows how to fit a full probability distribution to the current Black-Scholes option smile (prices of vanilla put \& call options at different strikes \& maturities). Asymptotic analysis (for example see \cite{HL} chapter 5) can be used to show that these probability distributions derived from real market prices of vanilla options show a persistent flattening of the short maturity skew, as one moves forward in time. In fact, there are theoretical no-arbitrage arguments why the skew must asymptotically approach zero for long maturities outlined in \cite{TR} section 5. Despite this, the skew flattening is not observed in reality - short maturity skew for options does not flatten. This issue causes problems in the pricing of path dependent options, and is sometimes solved through introducing a stochastic component to the local volatility. Not only does the quantum stochastic process have a natural skew, resulting from the Hamiltonian governing the system, but the large short maturity skew naturally persists every time is reset to zero, and flattens for longer maturities. The rate of flattening being described by the parameter $\varepsilon$.

\subsection{Future Developments}
\subsubsection{Nonlocal Diffusions}
The results presented in section 3, and those in \cite{Hicks}, show that there is a link between Hudson-Parthasarathy quantum stochastic processes, and nonlocal diffusions specified by a convolution function: $H(y)$ when all the moments of $H(y)$ exist. However, the link established is purely ``coincidental'' in the sense that no physical link between the 2 different types of diffusion is given. The nonlocal diffusion enters the picture purely through its Kramers-Moyal expansion.  One avenue of future theoretical work would be to develop the integral version of the quantum Kolmogorov equations from physical principals, and to tie this into the theory of quantum stochastic processes. One natural question to ask, is to whether there is a quantum representation of all nonlocal diffusion processes (including those where the moments of $H(y)$ are not defined), and if not whether there is a physical (or financial) explanation for this.
\subsubsection{Multi-Dimensional Quantum Stochastic Processes}
In one dimension, the methodology outlined provides an interesting, and natural, way of including market effects such as the observed ``skew'' in implied volatility, and the phenomenon of ``fat tails'' into the kernel function used to price derivatives. Indeed, the parameter $\varepsilon$ has a natural understanding as a market ``fear factor''. However, it is in multiple dimensions where the true benefits of the quantum approach become apparent. A traded price is not really a one dimensional variable, but consists of multiple bids \& offers sitting on an exchange. Including these added pieces of information does not necessarily impact the price of a derivative using a classical model based on Ito calculus. However, in the quantum case, not only does the added dimensionality increase the variety of different quantum models available, but the different dimensions interact and there are quantum effects that cannot be replicated using a classical model. Extending the path integral methodology to these multi-dimensional models is an important next step.
\subsubsection{Iterated Quantum Stochastic Processes}
Starting with a zero drift quantum stochastic process, where the time development operator is given by:
\begin{equation}\label{dUt_iterated}
dU_t=-\Bigg(\bigg(\frac{LL^*}{2}\bigg)dt+L^*SdA_t-LdA^{\dagger}_t+\bigg(1-S\bigg)d\Lambda_t\Bigg)U_t
\end{equation}
When $L=\sigma$, and $S$ represents a Lebesgue invariant translation: $T_{\varepsilon}$, we have shown that the system can be modelled using a Hamiltonian function given by:
\begin{equation}\label{H_iterated}
\hat{H} = \sigma^2\sum_{k\geq 2} \frac{\varepsilon^{k-2}\hat{P}^k}{k!}, \hat{P}=-i\frac{\partial}{\partial x}
\end{equation}
A quantum state, evolving deterministically under this Hamiltonian, is expected to show the random behaviour discussed in section \ref{ss:5.1}. The next natural step is to include this Hamiltonian as the drift in equation (\ref{dUt_iterated}) to generate a new quantum stochastic process. In theory, this cycle could be repeated indefinitely:
\begin{enumerate}
\item Choose a quantum stochastic process.
\item Identify the relevant Hamiltonian function, by which the path integral method can be applied.
\item Feed back the Hamiltonian into a new quantum stochastic process as the quantum drift.
\item Go back to step 1.
\end{enumerate}
This is an interesting area of possible future research.
\section{Conclusion}
In this article we have shown that, using Riemannian geometry, the representation of a quantum stochastic process as a nonlocal diffusion, can be inversed for the cases where the convolution function that defines the nonlocality ($H(y)$) has defined moments: $E^{H(y)}[y^n]$ for all $n$. We go on to show how one can use $\mathcal{PT}$ symmetric quantum mechanics to develop a path integral formulation of the Accardi-Boukas quantum Black-Scholes framework.


\end{document}